\def\BBox{\rule{2mm}{3mm}}
\def\QED{\hfill$\BBox$}
\newenvironment{proof}
{\begin{rm}\par\smallskip\noindent{\bf Proof.}\quad}{\QED\end{rm}}
\newenvironment{proof_main}
{\begin{rm}\par\smallskip\noindent{\bf Proof of Theorem~\ref{thm:main}.}\quad}{\QED\end{rm}}
\newtheorem{thm}{Theorem}[section]
\newtheorem{lem}[thm]{Lemma}        %
\newtheorem{cor}[thm]{\bfseries Corollary}
\newtheorem{cl}{\bfseries Claim}
\title{A new upper bound for angular resolution}
\author{
Hiroyuki Miyata\\
Faculty of Informatics, Gunma University, Japan\\
\texttt{hmiyata@gunma-u.ac.jp}
}
\begin{document}
\maketitle

\begin{abstract}
The angular resolution of a planar straight-line drawing of a graph is the smallest angle formed by two edges incident to the same vertex.
Garg and Tamassia (ESA '94) constructed a family of planar graphs with maximum degree~$d$ that have angular resolution $O((\log d)^{\frac{1}{2}}/d^{\frac{3}{2}})$ in any planar straight-line drawing.
This upper bound has been the best known upper bound on angular resolution for a long time.
In this paper,  we improve this upper bound. 
For an arbitrarily small positive constant $\varepsilon$,  we construct a family of planar graphs  with maximum degree~$d$ that have angular resolution $O((\log d)^\varepsilon/d^{\frac{3}{2}})$ in any planar straight-line drawing.
\end{abstract}

\section{Introduction}
There are many ways to represent (draw) the same graph.
Typically, vertices are represented as points in the plane, and edges are  represented as line segments. 
Such a representation of a graph is called a \emph{straight-line drawing}.
A straight-line drawing is \emph{planar} if no pair of edges crosses.
In the field of graph drawing, extensive research has been made to find ``good'' drawings of graphs with respect to various  aesthetic criteria (see \cite{DETT99,NR04,T13}).
One of the important aesthetic criteria for graph drawing is the \emph{angular resolution}, i.e., the smallest angle formed by two edges incident to the same vertex.

The trivial upper bound on the angular resolution of a graph with maximum degree~$d$ is $2\pi/d$.
For planar straight-line drawings, non-trivial upper and lower bounds have been obtained by several researchers.
Malitz and Papakostas~\cite{MP94} showed that every planar graph with maximum degree $d$ has a planar straight-line drawing with angular resolution~$\Omega (1/7^d)$. 
On the other hand, Garg and Tamassia~\cite{GT94} provided a non-trivial upper bound by constructing a family of planar graphs with angular resolution $O((\log d)^{\frac{1}{2}}/d^{\frac{3}{2}})$ in any  planar straight-line drawing.
These upper and lower bounds have been the best known bounds on angular resolution, and there is a huge gap between the upper bound and the lower bound.

For some restricted graph classes, more results are known. 
Malitz and Papakostas~\cite{MP94} showed that every outerplanar graph with maximum degree $d$ can be drawn with angular resolution $\pi/2(d-1)$.
This bound was later improved into $\pi/(d-1)$ by Garg and Tamassia~\cite{GT94}.
The authors of  the paper~\cite{GT94} also proved that every planar 3-tree and series-parallel graph with maximum degree~$d$ has a  planar straight-line drawing with angular resolution $\Omega (1/d^2)$.
The bound for series-parallel graphs was improved by Lenhart et al.~\cite{LLMN23}.
That is, Lenhart etl al.~\cite{LLMN23} proved that every partial $2$-tree, a superclass of series-parallel graphs, with maximum degree $d$ has a  planar straight-line drawing with angular resolution $\pi/2d$.

Angular resolution has also been studied for other drawing conventions.
Formann et al.~\cite{FHHKLSWW90} proved that planar graph with maximum degree $d$ has a (not necessarily planar) straight-line drawing with angular resolution $\Omega (1/d)$.
Kant~\cite{K96} presented an algorithm to construct a polyline drawing of a planar graph with angular resolution at least $4/(3d+1)$ (and with small number of bends and small area).
Gutwenger and Mutzel~\cite{GM98} improved this bound into $2/d$.
Cheng et al.~\cite{CDGK01} presented an algorithm to construct a drawing of a planar graph  with angular resolution $\Omega (1/d)$ using at most two circular arcs per edge.

In this paper, we improve the upper bound on angular resolution for planar straight-line drawings.
In Section~\ref{sec:def}, we give some necessary definitions.
In Section~\ref{sec:lemma}, we review a geometric lemma presented in \cite{GT94}, which plays an important role in deriving the new upper bound.
In Section~\ref{sec:result}, we first construct a family of planar graphs with angular resolution $O((\log d)^{\frac{1}{6}}/d^{\frac{3}{2}})$.
Then, slightly extending the construction, we present a family of planar graphs with angular resolution $O((\log d)^\varepsilon/d^{\frac{3}{2}})$ for any $\varepsilon > 0$.

\section{Notations and definitions}
\label{sec:def}
We summarize some basic definitions and notations in graph drawing. For more details, see~\cite{DETT99,NR04}.
Let $G=(V,E)$ be a graph.
For $u,v \in V$, we denote by $uv$ the edge connecting vertices $u$ and $v$.
For $k \geq 1$,  $G$ is \emph{$k$-connected} if $G$ has more than $k$ vertices, and removal of any $k-1$ vertices of $G$ does not disconnect $G$.
A \emph{straight-line drawing} of $G$ is a geometric representation of $G$ where the vertices of $G$ are represented as distinct points in the plane and the edges of $G$ are represented as line segments.
A straight-line drawing is \emph{planar} if no pair of edges crosses.
In the rest of the paper, we always consider planar straight-line drawings, which are hereinafter simply referred to as \emph{drawings}.
A graph is said to be \emph{planar} if it has a planar drawing. 
The \emph{angular resolution} of a drawing is the smallest angle formed by two edges incident to the same vertex.
A drawing of a planar graph $G$ divides the plane into regions called \emph{faces}.
The unbounded face is called the \emph{outer face}. 
An \emph{embedding} of  $G$ is an equivalence class of drawings of $G$ determined by the collection of circular clockwise orderings of the edges incident to each vertex.
It is known that every $3$-connected planar graph has a unique embedding in the plane (see \cite[Chapter~2]{NR04}).
A \emph{planar $3$-tree} is a graph obtained from the triangle graph by successively adding a new vertex $v$ in some face $f$ and connecting $v$ to the three vertices of $f$.
Since a planar $3$-tree is $3$-connected, it has a unique embedding.

For points $A,B,C$ in the plane, we denote by $\triangle{ABC}$ the triangle formed by $A$, $B$ and $C$.
By $\angle{ABC}$, we denote either the angle between line segments $AB$ and $BC$ or its measure.

\section{Geometric lemma}
\label{sec:lemma}
In this section, we review a geometric lemma presented by Garg and Tamassia~\cite{GT94}, which plays an important role in proving the upper bound on angular resolution.
Since a proof of the lemma is omitted in \cite{GT94}, we include a proof for completeness.
The lemma is described as follows:
\begin{lem}(\cite{GT94})
Let $\triangle{ABC}$ be a triangle with $\angle{BAC} \leq \pi/2$, 
$D$ a point inside $\triangle{ABC}$, and $\alpha_1$, $\alpha_2$, $\beta_1$, $\beta_2$, $\gamma_1$, $\gamma_2$ the angles defined as in Figure~\ref{fig:triangle_lemma}.
If $\alpha_2 \geq \alpha_1$, we have
\[ \min \left\{ \frac{\beta_2}{\beta_1}, \frac{\gamma_2}{\gamma_1} \right\} \leq \frac{\pi^2}{4} \sqrt{\frac{\alpha_1}{\alpha_2}}. \]
\label{lem:geometric}
\end{lem}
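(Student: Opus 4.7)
The plan is to reduce the inequality on angles to a product identity on sines (a Ceva-type relation), then translate back using the elementary bounds $(2/\pi)\,x \le \sin x \le x$ valid on $[0,\pi/2]$. First I would drop perpendiculars from $D$ to the three sides of $\triangle ABC$, with lengths $h$, $h_B$, $h_C$ to $BC$, $AB$, $AC$ respectively. Each of the six angles then sits in one of the small right triangles at $A$, $B$, or $C$, giving $\sin\alpha_1 = h_B/AD$, $\sin\alpha_2 = h_C/AD$, and analogous formulas at the other two vertices. Multiplying the appropriate pairs cancels the distances and produces the trigonometric form of Ceva's theorem for the concurrent cevians $AD$, $BD$, $CD$, which rearranges to
\[
\frac{\sin\beta_2}{\sin\beta_1} \cdot \frac{\sin\gamma_2}{\sin\gamma_1} \;=\; \frac{\sin\alpha_1}{\sin\alpha_2}.
\]
Since $\alpha_1 \le \alpha_2 \le \pi/2$ the right-hand side lies in $(0,1]$, and the elementary inequality $\min(x,y) \le \sqrt{xy}$ for positive $x,y$ yields
\[
\min\!\Bigl(\frac{\sin\beta_2}{\sin\beta_1},\; \frac{\sin\gamma_2}{\sin\gamma_1}\Bigr) \;\le\; \sqrt{\frac{\sin\alpha_1}{\sin\alpha_2}} \;\le\; 1.
\]

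All that remains is to convert this sine-ratio bound into an angle-ratio bound. Suppose without loss of generality that the minimum on the left is $\sin\beta_2/\sin\beta_1$. The anticipated obstacle is that the convenient inequality $\theta \le (\pi/2)\sin\theta$ holds only for $\theta \le \pi/2$, and a priori $\beta_2$ could be obtuse. The key observation is that $\sin\beta_2 \le \sin\beta_1$ combined with $\beta_1 + \beta_2 < \pi$ rules out $\beta_2 > \pi/2$: otherwise $\pi - \beta_2 > \beta_1$ would give $\sin\beta_2 = \sin(\pi-\beta_2) > \sin\beta_1$, a contradiction. Hence $\beta_2 \le \pi/2$, so $\beta_2 \le (\pi/2)\sin\beta_2$ while $\sin\beta_1 \le \beta_1$, yielding $\beta_2/\beta_1 \le (\pi/2)\sqrt{\sin\alpha_1/\sin\alpha_2}$.

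Finally, applying $\sin\alpha_1 \le \alpha_1$ and $\sin\alpha_2 \ge (2/\pi)\alpha_2$ (valid because $\alpha_1, \alpha_2 \in (0, \pi/2]$), together with the trivial estimate $\sqrt{\pi/2} \le \pi/2$, bounds the right-hand side by $(\pi^2/4)\sqrt{\alpha_1/\alpha_2}$. Since $\min(\beta_2/\beta_1,\gamma_2/\gamma_1) \le \beta_2/\beta_1$, this gives the desired inequality. The main subtlety throughout is the case analysis ruling out $\beta_2 > \pi/2$ (and its symmetric counterpart for $\gamma_2$); once that is in hand, the argument is a direct computation using trigonometric Ceva and the two standard sine estimates.
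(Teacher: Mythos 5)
Your proof is correct and takes essentially the same route as the paper: the trigonometric Ceva identity (which the paper obtains by repeated application of the law of sines) combined with the elementary bounds $\tfrac{2}{\pi}x \le \sin x \le x$ on $[0,\pi/2]$. The only difference is organizational --- the paper handles $\beta_2 \ge \pi/2$ as a separate case, while you use the same observation (that $\beta_1+\beta_2\le\pi$ forces $\sin\beta_2\ge\sin\beta_1$ when $\beta_2\ge\pi/2$) to rule that case out for the ratio minimizing the sine quotient.
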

\begin{figure}[h]
\begin{center}
\includegraphics[scale=0.18, bb =  0 0 601 511,clip]{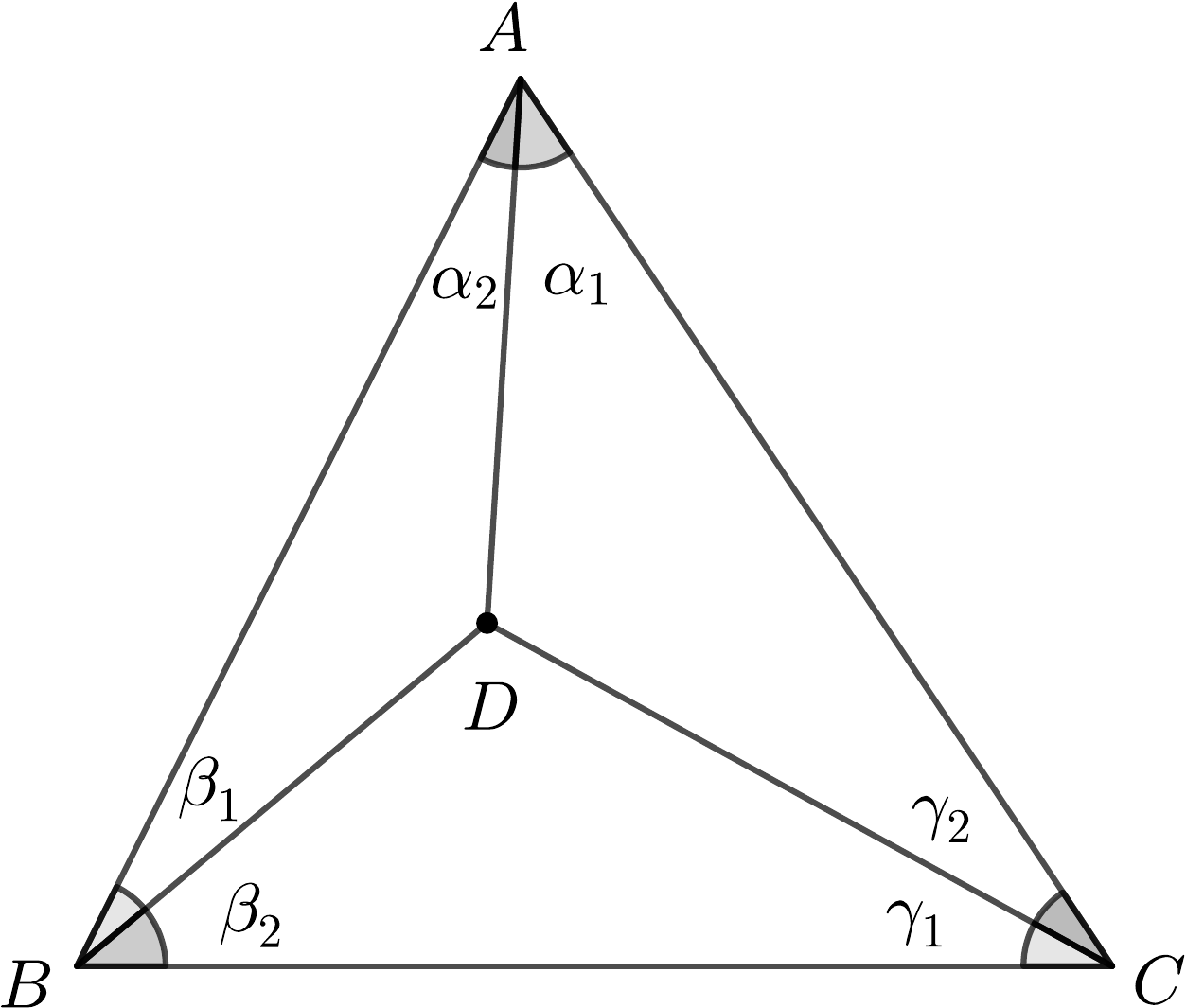}
\end{center}
\caption{$\triangle{ABC}$ in Lemma~\ref{fig:triangle_lemma}}
\label{fig:triangle_lemma}
\end{figure}
\mbox{}
\\

\begin{proof}
Without loss of generality, we assume that $\beta_2/\beta_1 \leq \gamma_2/\gamma_1$.
By repeated application of the law of sines, we obtain
\[ \frac{\sin \alpha_2}{\sin \alpha_1} \cdot \frac{\sin \beta_2}{\sin \beta_1} \cdot \frac{\sin \gamma_2}{\sin \gamma_1} = 1.\]
We first assume that $\beta_2, \gamma_2 \leq \pi/2$.
Then, since $\frac{2}{\pi}x \leq \sin x$ for $0 \leq x \leq \frac{\pi}{2}$ and $\sin x \leq x$ for $x \geq 0$, we have
\[ \frac{8}{\pi^3} \cdot \frac{\alpha_2}{\alpha_1} \cdot \frac{\beta_2}{\beta_1} \cdot \frac{\gamma_2}{\gamma_1} \leq 1.\]
Therefore, we have
\[ \left( \frac{\beta_2}{\beta_1}\right)^2 \leq \frac{\pi^3}{8} \cdot \frac{\alpha_1}{\alpha_2}.\]
This implies
\[ \frac{\beta_2}{\beta_1} \leq \frac{\pi^2}{4} \sqrt{\frac{\alpha_1}{\alpha_2}}. \]
Next, we assume that $\beta_2 \geq \pi/2$. Then, we have  $\beta_2 \geq \beta_1$ and $\gamma_2 \leq \pi/2$.
Since $\beta_1 + \beta_2 \leq \pi$, we have $\sin \beta_2 = \sin (\pi -\beta_2) \geq \sin \beta_1$.
Therefore, we have
\[ \frac{4}{\pi^2} \cdot \frac{\alpha_2}{\alpha_1} \cdot 1 \cdot \frac{\gamma_2}{\gamma_1} \leq 1.\] 
Since $\alpha_1/\alpha_2 \leq 1$, we have
\[  \frac{\beta_2}{\beta_1} \leq \frac{\gamma_2}{\gamma_1} \leq \frac{\pi^2}{4} \cdot \frac{\alpha_1}{\alpha_2} \leq \frac{\pi^2}{4} \sqrt{\frac{\alpha_1}{\alpha_2}}. \]
In a similar way, we can prove the inequality in the case that  $\gamma_2 \geq \pi/2$. 
\end{proof}

\section{A family of planar graphs with angular resolution $O((\log d)^{\varepsilon}/d^{\frac{3}{2}})$}
\label{sec:result}
In this section, we construct a family of planar graphs with maximum degree $d$
that have angular resolution $O((\log d)^{\varepsilon}/d^{\frac{3}{2}})$ in any drawing, for any $\varepsilon > 0$.

\subsection{Definition of graphs $\widetilde{H}_d^{(c)}$}
For $d \geq 1$, we first define a family of graphs $F_d$ as follows.
Let $F_1$ be the triangle graph with vertices $u_1$, $v_1$, and $w$.
For $d \geq 2$, let $F_d$ be the graph obtained by adding new vertices $u_d$ and $v_d$, and new edges $wu_d$, $wv_d$, $u_dv_d$, $u_du_{d-1}$, 
$u_dv_{d-1}$ to  graph $F_{d-1}$ (see Figure~\ref{fig:f_d}).
We call graph $F_d$ the \emph{$d$-frame graph}.
The vertex $w$ is called the \emph{root vertex} of $F_d$.
The path $u_d,\dots,u_1,v_1,\dots,v_d$ is called the \emph{central path} of $F_d$.
The maximum degree of $F_d$ is $2d$.

Next, we define a family of graphs $G^{(c)}_d$ for $c,d \geq 1$ as follows.
For $d \geq 1$, let $G^{(1)}_d$ be the ($d+1$)-frame graph with vertices $w^{(1)},  u^{(1)}_1,\dots,u^{(1)}_{d+1}, v^{(1)}_1,\dots,v^{(1)}_{d+1}$, where
$w^{(1)}$ is the root vertex,  and the central path is $u^{(1)}_{d+1},\dots,u^{(1)}_1,v^{(1)}_1,\dots,v^{(1)}_{d+1}$.
For $d \geq 1$ and $c \geq 2$, we define graph $G^{(c)}_d$ as follows.
We first consider the ($d+1$)-frame graph with vertices $w^{(c)},  u^{(c)}_1,\dots,u^{(c)}_{d+1}, v^{(c)}_1,\dots,v^{(c)}_{d+1}$, where
$w^{(c)}$ is the root vertex, and the central path is $u^{(c)}_{d+1},\dots,u^{(c)}_1,v^{(c)}_1,\dots,v^{(c)}_{d+1}$.
Let $G_d^{(c)}$ be the graph obtained by inserting a copy of $G^{(c-1)}_d$ into each of the regions $\triangle{w^{(c)}v^{(c)}_kv^{(c)}_{k+1}}$ (the root vertex of $G^{(c-1)}_d$ is placed on $v^{(c)}_{k+1}$)
and $\triangle{v^{(c)}_{k+1}u^{(c)}_{k+1}v^{(c)}_k}$ (the root vertex of $G^{(c-1)}_d$ is placed on $u^{(c)}_{k+1}$)
for $k=1,\dots,d-1$ (see Figure~\ref{fig:g_d_c}).
We call $w^{(c)}$ the \emph{root vertex} of $G_d^{(c)}$.
The maximum degree of $G^{(c)}_d$ is at most $4d+13$.

Finally, we construct graph $H^{(c)}_{d}$  by assembling three copies of $G^{(c)}_d$ as in Figure~\ref{fig:h_d_c} (so that the angle $\angle{u_d^{(c-1)}w^{(c-1)}v_d^{(c-1)}}$ in  $G^{(c-1)}_d$ corresponds to the angles $\angle{s_1s_3s_4}$, $\angle{s_2s_1s_4}$, and $\angle{s_3s_2s_4}$)
and graph $\widetilde{H}^{(c)}_{d}$ by assembling three copies of $H^{(c)}_d$ as in Figure~\ref{fig:h_d_c_2}.
Since $\widetilde{H}^{(c)}_{d}$ is a planar 3-tree, the embedding of  $\widetilde{H}^{(c)}_{d}$ is unique.
The maximum degree of $\widetilde{H}^{(c)}_{d}$ is at most $8d+31$.

\begin{figure}[h]
\begin{center}
\includegraphics[scale=0.3, bb =  0 0 645 429,clip]{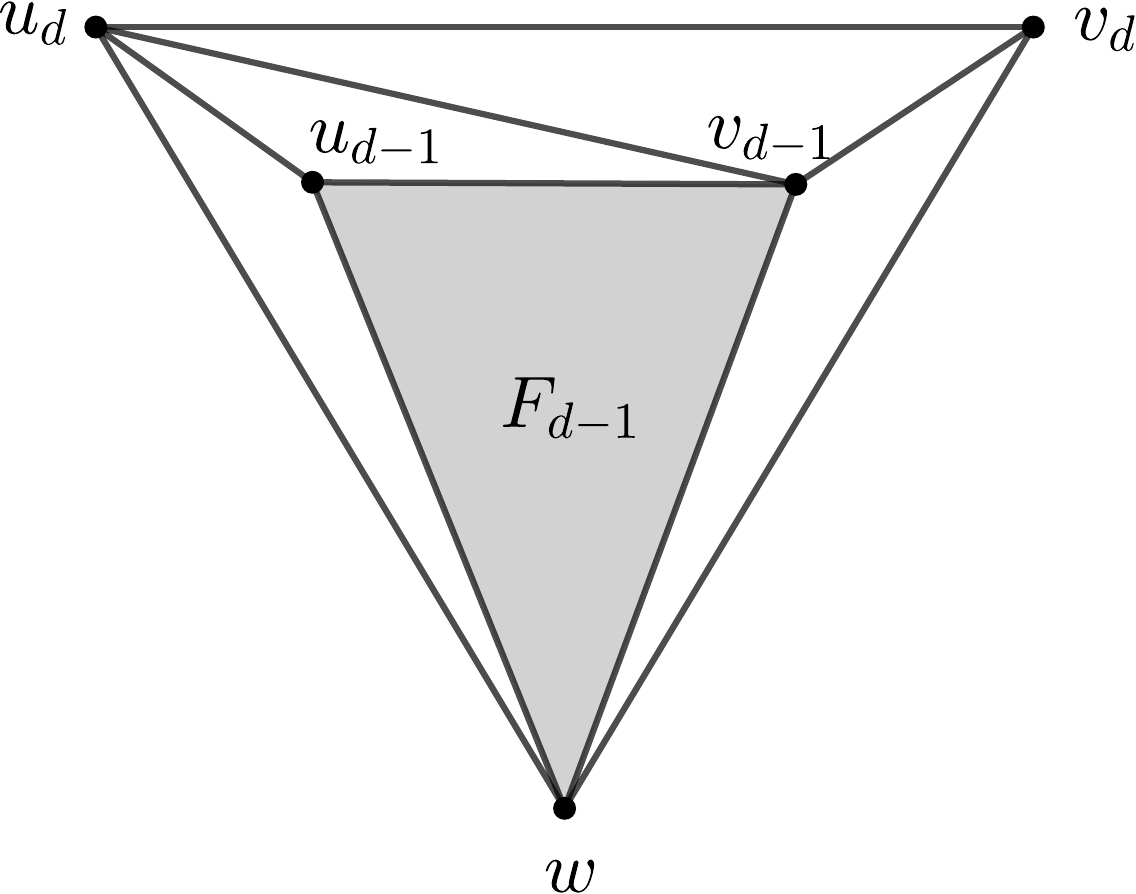}
\includegraphics[scale=0.25, bb = 0 0 790 497,clip]{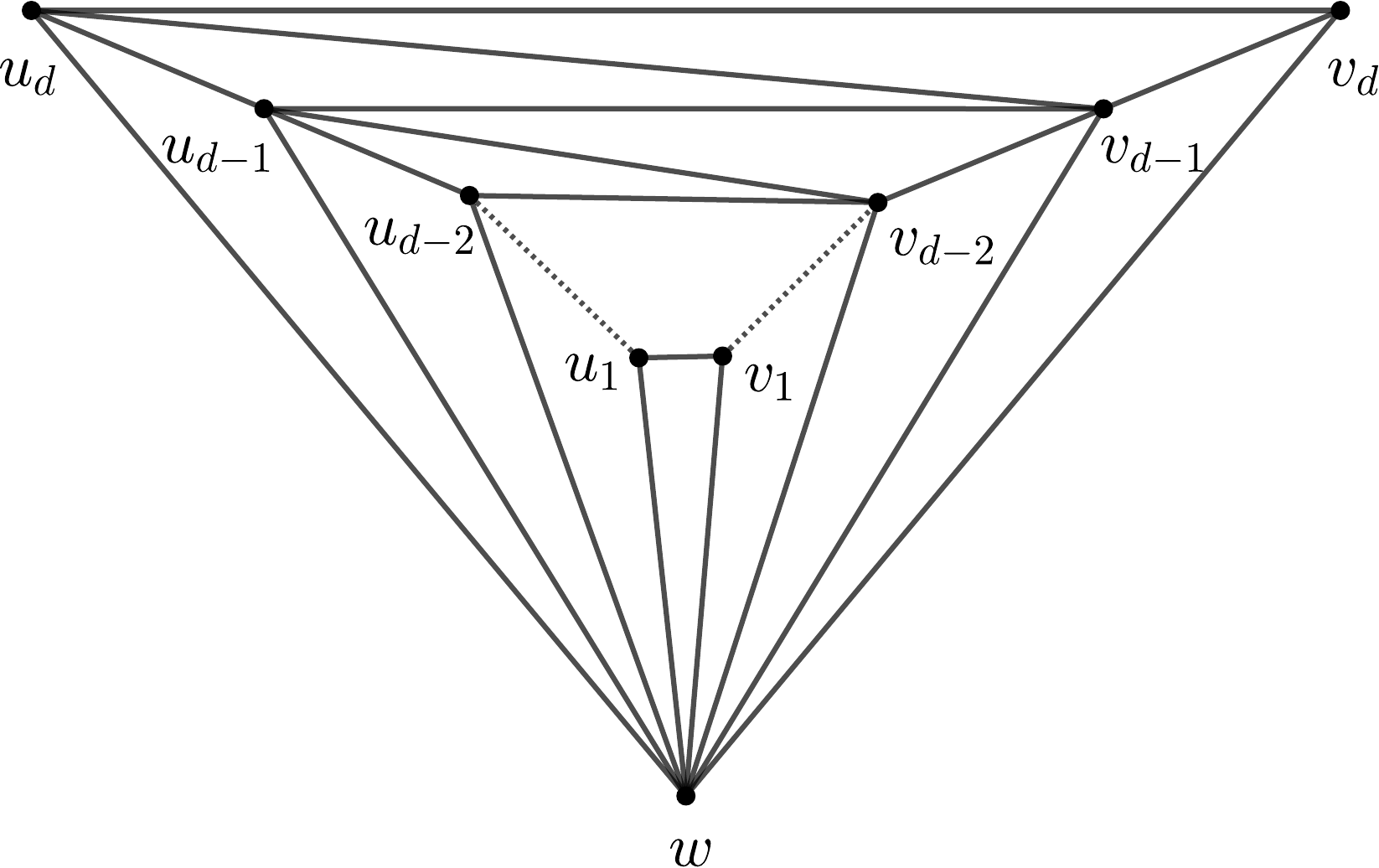}
\end{center}
\caption{Graph $F_d$}
\label{fig:f_d}
\end{figure}

\begin{figure}[h]
\begin{center}
\includegraphics[scale=0.3, bb =  0 0 1409 964,clip]{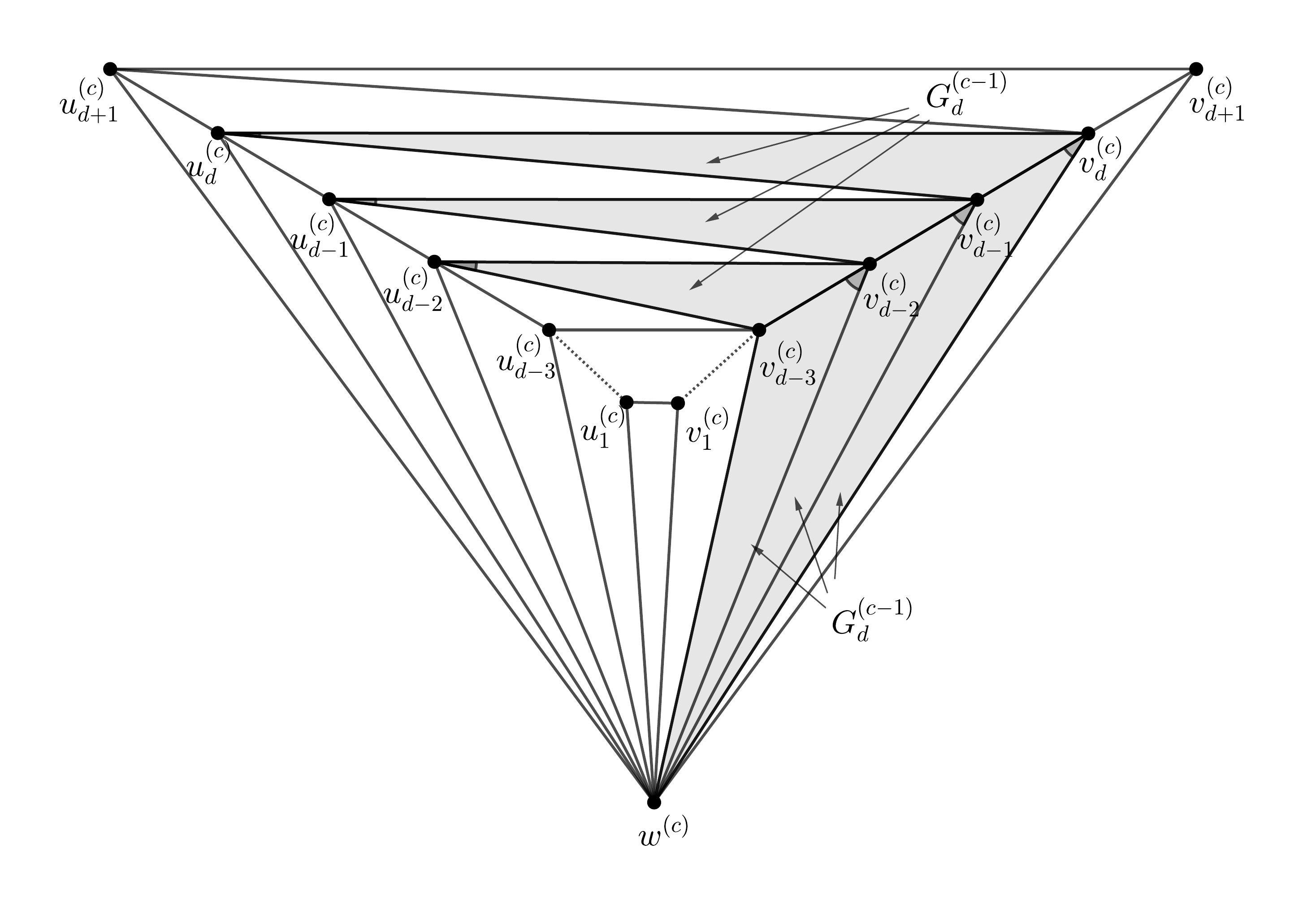}
\end{center}
\caption{Graph $G^{(c)}_{d}$}
\label{fig:g_d_c}
\end{figure}

\begin{figure}[h]
\begin{minipage}{0.5\linewidth}
\centering
\includegraphics[scale=0.18, bb =  0 0 708 556,clip]{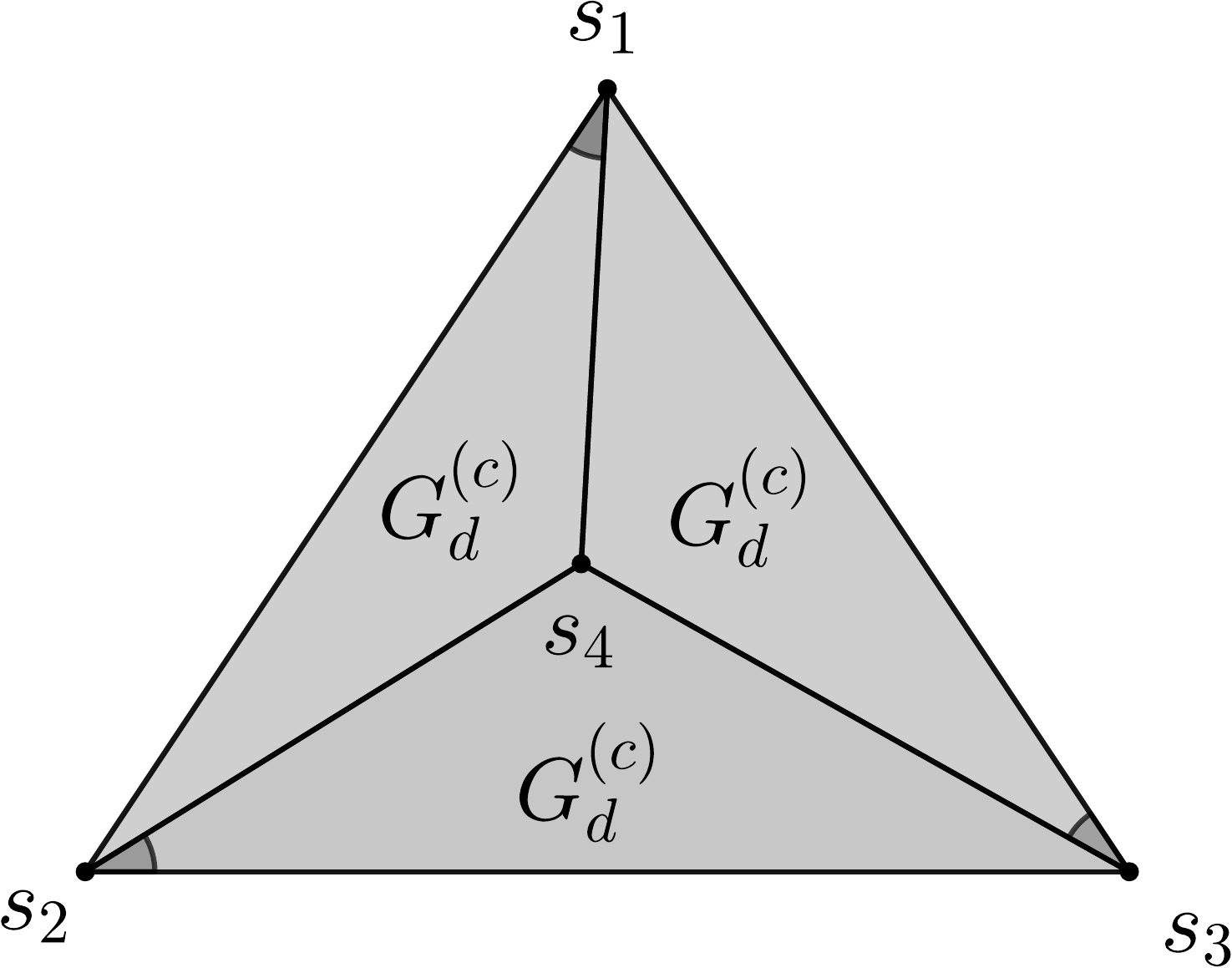}
\caption{Graph $H^{(c)}_d$}
\label{fig:h_d_c}
\end{minipage}
\begin{minipage}{0.5\linewidth}
\centering
\includegraphics[scale=0.18, bb =  0 0 691 570,clip]{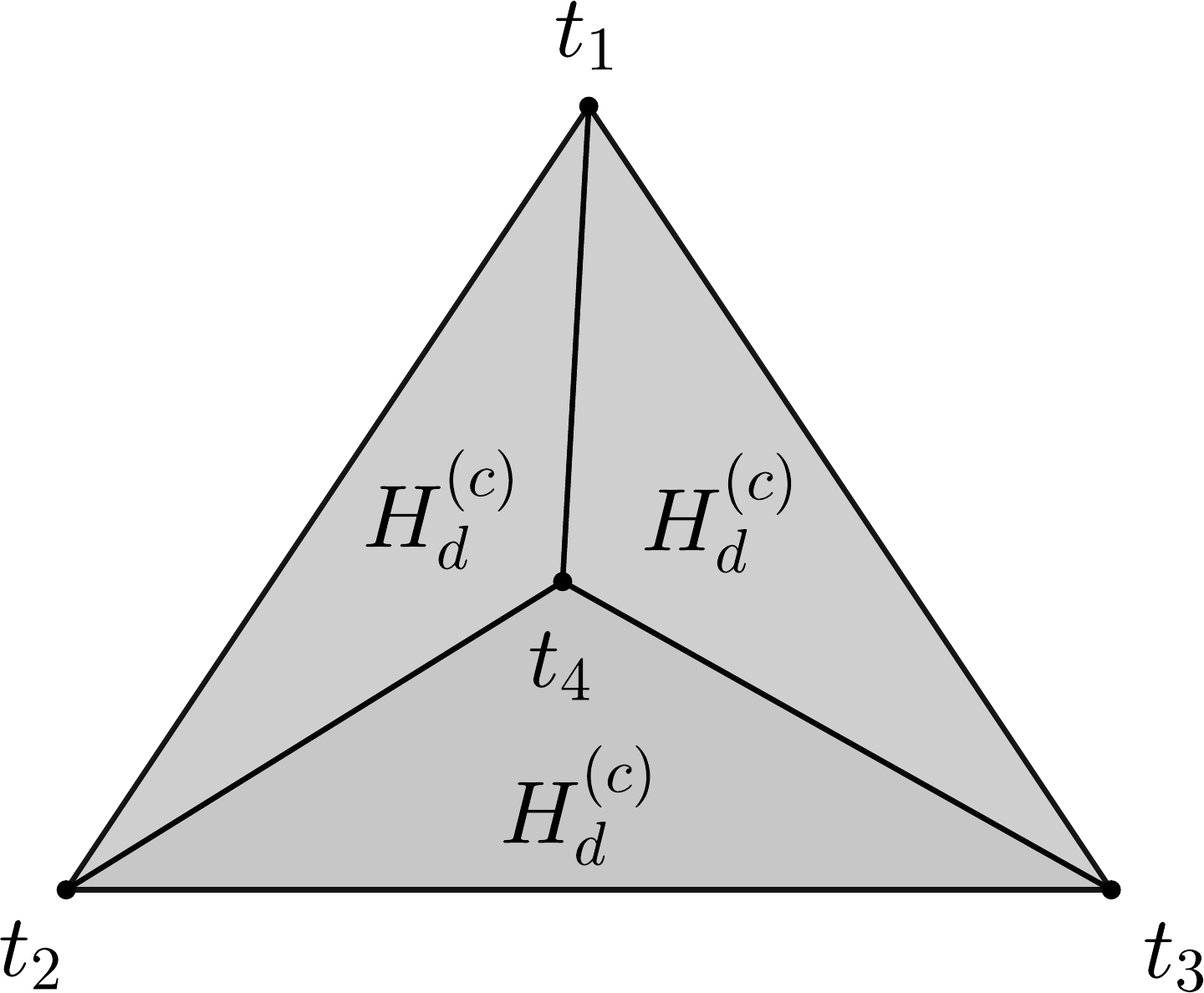}
\caption{Graph $\widetilde{H}^{(c)}_{d}$}
\label{fig:h_d_c_2}
\end{minipage}
\end{figure}

One can easily construct a drawing of graph $\widetilde{H}^{(1)}_{d}$ with angular resolution $\Omega (1/d)$.
On the other hand, one can prove that any drawing $\widetilde{H}^{(2)}_{d}$ has angular resolution $O((\log d)^{\frac{1}{2}}/d^{\frac{3}{2}})$ by a similar argument to the proof of Theorem~2 in $\cite{GT94}$.
In the following, we investigate the angular resolution of $\widetilde{H}^{(c)}_{d}$ for $c \geq 3$.

\subsection{Angular resolution of $\widetilde{H}^{(3)}_{d}$}
As a first step, we investigate the angular resolution of $\widetilde{H}^{(3)}_{d}$.
\begin{thm}
Graph $\widetilde{H}^{(3)}_d$ has an angle of size $O((\log d)^{\frac{1}{6}}/d^{\frac{3}{2}})$ in any planar straight-line drawing.
\label{thm:pre}
\end{thm}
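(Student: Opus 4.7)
I would extend the argument of Garg and Tamassia, which gives the bound $O((\log d)^{1/2}/d^{3/2})$ for $\widetilde{H}^{(2)}_d$, by chaining two applications of Lemma~\ref{lem:geometric} across the extra level of recursion present in $\widetilde{H}^{(3)}_d$ and then balancing the resulting logarithmic losses.

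\textbf{Setup.} Fix an arbitrary planar straight-line drawing of $\widetilde{H}^{(3)}_d$ and let $\delta$ be its angular resolution. Because $\widetilde{H}^{(3)}_d$ is a planar $3$-tree its embedding is unique, so one may single out a distinguished outer copy of $G^{(3)}_d$ with root $w^{(3)}$, then inside one of its triangular regions a middle copy of $G^{(2)}_d$ with root $w^{(2)}$, and then inside one of \emph{its} triangular regions an innermost $G^{(1)}_d = F_{d+1}$ with root $w^{(1)}$. This exhibits three nested fans of size $\Theta(d)$ each. Let $\theta^{(i)}$ denote the angular opening available to the fan rooted at $w^{(i)}$. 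The outer opening $\theta^{(3)}$ is bounded by a constant coming from the outer assembly of three copies in $H^{(3)}_d$ and $\widetilde{H}^{(3)}_d$ (the symmetry forces one of the three copies to see an angle of at most $2\pi/3$ at its root), while $\theta^{(2)}$ and $\theta^{(1)}$ will be controlled by Lemma~\ref{lem:geometric}.

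\textbf{Chained squeeze.} For each consecutive triple along the outer central path, apply Lemma~\ref{lem:geometric} to the triangle $\triangle w^{(3)} v^{(3)}_{k-1} v^{(3)}_{k+1}$ with interior point $v^{(3)}_k$; after swapping indices to ensure $\alpha_2 \ge \alpha_1$, this produces at one of $v^{(3)}_{k-1}$, $v^{(3)}_{k+1}$ an angle bounded by $(\pi^2/4)\sqrt{r^{(3)}_k}$ times the complementary angle at that vertex, where $r^{(3)}_k$ is the ratio of the two adjacent outer angles at $w^{(3)}$. This squeezed angle is precisely the opening $\theta^{(2)}$ available to the inserted $G^{(2)}_d$ rooted there. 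Repeating the same argument one level deeper inside that $G^{(2)}_d$ yields $\theta^{(1)} \le C\sqrt{r^{(2)}}\,\theta^{(2)}$, and a trivial pigeonhole on the innermost $F_{d+1}$-fan (whose $2d+2$ edges must fit inside $\theta^{(1)}$) gives $\delta \le \theta^{(1)}/(2d+2)$. Chaining these inequalities produces
\[
\delta \;\le\; \frac{C}{d}\sqrt{r^{(2)}\,r^{(3)}}.
\]

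\textbf{Optimization and main obstacle.} To finish, one must bound the ratios $r^{(2)}$ and $r^{(3)}$ by carefully choosing the indices at each level. A pigeonhole argument in the spirit of Garg and Tamassia shows that, for a fan of $\Theta(d)$ angles summing to at most $\theta$ and each at least $\delta$, one can select an adjacent pair whose ratio is controlled in terms of $\theta/(d\delta)$ and $\log d$. Plugging such bounds in at both levels, and remembering that the middle-level constraint itself involves $\theta^{(2)} \lesssim \sqrt{r^{(3)}}$, turns the combined inequality into a self-referential one for $\delta$; resolving it amounts to distributing the total logarithmic loss across the three pigeonholes (two applications of Lemma~\ref{lem:geometric} and one final fan) by AM-GM, which yields $\delta = O((\log d)^{1/6}/d^{3/2})$ as claimed. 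The main obstacle is the bookkeeping needed to ensure that the index-choices made at the outer and middle levels refer to the same descending chain of nested subgraphs, together with verifying the hypothesis $\angle BAC \le \pi/2$ of Lemma~\ref{lem:geometric} at each application; both can be handled by exploiting the threefold symmetry of the outer assemblies $H^{(c)}_d$ and $\widetilde{H}^{(c)}_d$, which always offers a copy in which the relevant angle is acute.
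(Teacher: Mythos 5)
Your outline coincides with the paper's strategy (two chained applications of Lemma~\ref{lem:geometric} across the nested copies $G^{(3)}_d \supset G^{(2)}_d \supset G^{(1)}_d$, acuteness via the threefold assembly, and a final pigeonhole on the innermost fan that makes the inequality self-referential in $\delta$), but the quantitative heart of the argument is missing, and as sketched the exponents do not come out right. First, the outer-level ratio is not obtained by comparing ``two adjacent outer angles'' at $w^{(3)}$: adjacent fan angles can all be equal, so no useful ratio between them is guaranteed. The paper compares a single fan angle $\angle v^{(3)}_{i-1}w^{(3)}v^{(3)}_i$ with the \emph{cumulative} angle $\angle u^{(3)}_iw^{(3)}v^{(3)}_{i-1}$ on the far side of the interior point, and at the outer level it bounds this ratio by a telescoping product $\prod_k r_k/(1+r_k) \geq \delta/\pi$, which forces $\max_k r_k = \Omega(d/\log d)$ and hence an opening $\theta^{(2)} = O(\sqrt{\log d/d})$ for the middle copy. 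Your generic pigeonhole ``$\alpha_1 \leq 2\theta/d$ versus $\alpha_2 \geq d\delta$'' is genuinely weaker there: using it at both levels (with the correct form of the lemma, see below) the self-referential inequality resolves to $\delta = O(d^{-10/7})$, which does not even recover the Garg--Tamassia bound, let alone the exponent $1/6$. The $\Omega(d/\log d)$ product argument at the outer level, followed by the $\theta/(d\delta)$-type comparison only at the middle level, is exactly what produces $(\log d)^{1/6}$; ``AM-GM balancing of logarithmic losses'' does not substitute for it.

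Second, two local steps are incorrect as stated. The chained inequality $\theta^{(1)} \leq C\sqrt{r^{(2)}}\,\theta^{(2)}$ over-claims what Lemma~\ref{lem:geometric} gives: the factor multiplying $\sqrt{\alpha_1/\alpha_2}$ is the complementary angle $\beta_1$ at the vertex $v_j$ (or $\gamma_1$ at $u_j$), which is bounded only by $\pi$, not by the opening at the root; the correct conclusion is $\theta^{(1)} = O(\sqrt{r^{(2)}})$, so your formula $\delta \leq (C/d)\sqrt{r^{(2)}r^{(3)}}$ is not justified (and, taken at face value, would prove a bound stronger than the theorem). Moreover, applying the lemma to $\triangle w^{(3)}v^{(3)}_{k-1}v^{(3)}_{k+1}$ with ``interior point'' $v^{(3)}_k$ is not legitimate: $v^{(3)}_{k-1}v^{(3)}_{k+1}$ is not an edge, and $v^{(3)}_k$ need not lie inside that triangle (the quadrilateral $w^{(3)}v^{(3)}_{k-1}v^{(3)}_kv^{(3)}_{k+1}$ may be convex). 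The paper applies the lemma to $\triangle w^{(3)}v^{(3)}_iu^{(3)}_i$ with interior point $v^{(3)}_{i-1}$, which is a union of faces of the (unique) embedding, and the two candidate squeezed angles $\angle w^{(3)}v^{(3)}_iv^{(3)}_{i-1}$ and $\angle v^{(3)}_iu^{(3)}_iv^{(3)}_{i-1}$ are precisely the root angles of the two copies of $G^{(2)}_d$ inserted into $\triangle w^{(3)}v^{(3)}_{i-1}v^{(3)}_i$ and $\triangle v^{(3)}_iu^{(3)}_iv^{(3)}_{i-1}$; this is what lets the recursion descend regardless of which of the two bounds the lemma delivers.
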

\begin{proof}
The idea of the proof is similar to that of Theorem~2 in \cite{GT94}.
We first prove that graph $H^{(3)}_{d}$ has an angle of size $O((\log d)^{\frac{1}{6}}/d^{\frac{3}{2}})$ in any drawing where $\triangle{s_1s_2s_3}$ is the outer face.
On the contrary, let us assume that there is a drawing $\Gamma^{(3)}_{H}$ of $H^{(3)}_{d}$ in which $\triangle{s_1s_2s_3}$ is the outer face, and all angles have size $\omega ((\log d)^{\frac{1}{6}}/d^{\frac{3}{2}})$.
Without loss of generality, we assume that $\angle{s_1s_2s_3} \leq \pi /2$ in $\Gamma^{(3)}_{H}$.
Restricting $\Gamma^{(3)}_{H}$ to the region $\triangle{s_2s_3s_4}$, we obtain a drawing $\Gamma^{(3)}_{G}$ of $G^{(3)}_{3}$ with $\angle{u^{(3)}_{d+1}w^{(3)}v^{(3)}_{d+1}} \leq \pi /2$.
For $k=2,\dots,d$, let $\alpha_{k,1} \coloneqq \angle{v^{(3)}_{k-1}w^{(3)}v^{(3)}_k}$, $\alpha_{k,2} \coloneqq \angle{u^{(3)}_kw^{(3)}v^{(3)}_{k-1}}$, and $\alpha_{k,3} \coloneqq \angle{v^{(3)}_{1}w^{(3)}v^{(3)}_{k-1}}$, and
define $r_k \coloneqq \alpha_{k,3}/\alpha_{k,1}$.
Then, we have $\alpha_{2,1} = \left(\prod_{k=2}^d{\frac{r_k}{1+r_k}}\right)\angle{v^{(3)}_1w^{(3)}v^{(3)}_d}$.
Since $\alpha_{2,1} = \omega ((\log d)^{\frac{1}{6}}/d^{\frac{3}{2}})$ and $\angle{v^{(3)}_1w^{(3)}v^{(3)}_d} < \pi$, we have 
\[ \prod_{k=2}^d{\frac{r_k}{1+r_k}} = \prod_{k=2}^d{\frac{1}{\frac{1}{r_k}+1}} = \omega \left(\frac{(\log d)^{1/6}}{d^{3/2}}\right).\]
Let $i \coloneqq \arg\max_{k=2,\dots,d} r_k$. Since $r_i \geq 1$, we obtain $(1+1/r_i)^d = O(d^{\frac{3}{2}}/(\log d)^{\frac{1}{6}})$ from the above relation.
This implies $(1+1/r_i)^d = O(d^{3/2})$. Then, we obtain $r_i = \Omega (d/\log d)$ with simple manipulations.
Since $\alpha_{i,2}/\alpha_{i,1} \geq \alpha_{i,3}/\alpha_{i,1}$, we have  $\alpha_{i,2}/\alpha_{i,1} = \Omega (d/\log d)$.
Let $\beta_{i,1} \coloneqq \angle{v^{(3)}_{i-1}v^{(3)}_iu^{(3)}_i}$, $\beta_{i,2} \coloneqq \angle{w^{(3)}v^{(3)}_iv^{(3)}_{i-1}}$, $\gamma_{i,1} \coloneqq \angle{v^{(3)}_{i-1}u^{(3)}_iw^{(3)}}$, $\gamma_{i,2} \coloneqq \angle{v^{(3)}_{i-1}u^{(3)}_iv^{(3)}_{i}}$.
By Lemma~\ref{lem:geometric}, we have either $\beta_{i,2}/\beta_{i,1} = O((\log d/d)^{1/2})$ or $\gamma_{i,2}/\gamma_{i,1}= O((\log d/d)^{1/2})$, which 
implies $\beta_{i,2} = O((\log d/d)^{1/2})$ or $\gamma_{i,2} = O((\log d/d)^{1/2})$.
Restricting the drawing $\Gamma^{(3)}_{G}$ to the region $\triangle{w^{(3)}v^{(3)}_iv_{i-1}^{(3)}}$ or  $\triangle{v^{(3)}_{i-1}u^{(3)}_iv^{(3)}_{i}}$,  we obtain a drawing $\Gamma^{(2)}_{G}$ of $G^{(2)}_{d}$ with $\angle{u^{(2)}_{d+1}w^{(2)}v^{(2)}_{d+1}} =  O((\log d/d)^{1/2})$.

In $\Gamma^{(2)}_{G}$, we consider $j \coloneqq \arg\min_{k=\frac{d}{2},\dots,d} \angle{v^{(2)}_kw^{(2)}v^{(2)}_{k-1}}$,
 $\alpha'_{1} \coloneqq \angle{v^{(2)}_jw^{(2)}v^{(2)}_{j-1}}$, and $\alpha'_{2} \coloneqq \angle{u^{(2)}_jw^{(2)}v^{(2)}_{j-1}}$.
Then, we have $\alpha'_{1}  \leq \frac{2}{d+2} \times \angle{u^{(2)}_dw^{(2)}v^{(2)}_d} = O((\log d)^{\frac{1}{2}}/d^{\frac{3}{2}})$.
On the other hand, we have $\alpha'_{2}  = \omega ((\log d)^{\frac{1}{6}}/d^{\frac{1}{2}})$ since $\alpha'_2$ contain more than $d-1$ angles of size  $\omega ((\log d)^{\frac{1}{6}}/d^{\frac{3}{2}})$ in its interior.
Combining these two relations, we obtain 
$\alpha'_{2}  /\alpha'_{1} =\Omega(d/(\log d)^{\frac{1}{3}})$.
Letting $\beta'_{1} \coloneqq \angle{u^{(2)}_jv^{(2)}_jv^{(2)}_{j-1}}$, $\beta'_{2} \coloneqq \angle{w^{(2)}v^{(2)}_jv^{(2)}_{j-1}}$, $\gamma'_{1} \coloneqq \angle{w^{(2)}u^{(2)}_jv^{(2)}_{j-1}}$,  and $\gamma'_{2} \coloneqq \angle{v^{(2)}_ju^{(2)}_jv^{(2)}_{j-1}}$, 
we have either $\beta'_{2} /\beta'_1 =O((\log d)^{\frac{1}{6}}/d^{\frac{1}{2}})$ or $\gamma'_2 /\gamma'_1 =O((\log d)^{\frac{1}{6}}/d^{\frac{1}{2}})$ by Lemma~\ref{lem:geometric}.
This implies $\beta'_{2} =O((\log d)^{\frac{1}{6}}/d^{\frac{1}{2}})$ or $\gamma'_2=O((\log d)^{\frac{1}{6}}/d^{\frac{1}{2}})$.
Since both of  the angles $\angle{w^{(2)}v^{(2)}_jv^{(2)}_{j-1}}$ and $\angle{v^{(2)}_ju^{(2)}_jv^{(2)}_{j-1}}$ include $2d$ angles in their interior, there is an angle of size $O((\log d)^{\frac{1}{6}}/d^{\frac{3}{2}})$ in $\Gamma^{(2)}_{G}$. This is a contradiction.
Therefore, graph $H^{(3)}_{d}$ must have an angle of size $O((\log d)^{\frac{1}{6}}/d^{\frac{3}{2}})$ in any drawing where $\triangle{s_1s_2s_3}$ is the outer face.

Now we can readily prove the theorem.
Take a drawing of  $\widetilde{H}^{(3)}_d$ arbitrarily.
Since any drawing of  $\widetilde{H}^{(3)}_d$  contains a  drawing of $H^{(3)}_d$ where $\triangle{s_1s_2s_3}$ is the outer face,
it must contain an angle of size $O((\log d)^{\frac{1}{6}}/d^{\frac{3}{2}})$ by the above discussion.
Hence the theorem is proved.
\end{proof}

\subsection{Angular resolution of $\widetilde{H}^{(c)}_{d}$}
Now we investigate the angular resolution of $\widetilde{H}^{(c)}_{d}$.
Extending the proof of Theorem~\ref{thm:pre}, we prove the following theorem:
\begin{thm}
Let $c \geq 2$. In any planar straight-line drawing of graph $\widetilde{H}^{(c)}_d$, there is an angle of size $O((\log d)^{\frac{1}{2\cdot 3^{c-2}}}/d^{\frac{3}{2}})$.
\label{thm:main}
\end{thm}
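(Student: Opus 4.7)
The plan is to prove the theorem by induction on $c$, recycling the two-step structure of the proof of Theorem~\ref{thm:pre}. The base case $c=2$ is the upper bound $O((\log d)^{1/2}/d^{3/2})$ of Garg and Tamassia, already recalled in the remark after Theorem~\ref{thm:pre}. For the inductive step I would take a planar drawing of $\widetilde{H}^{(c)}_d$ and, toward a contradiction, assume every angle has size $\omega((\log d)^{\epsilon_c}/d^{3/2})$ with $\epsilon_c = 1/(2\cdot 3^{c-2})$. Exactly as in the proof of Theorem~\ref{thm:pre}, I would restrict first to an inner drawing of $H^{(c)}_d$ whose outer face is $\triangle{s_1 s_2 s_3}$, and then to a drawing $\Gamma^{(c)}_{G}$ of $G^{(c)}_d$ with $\angle{u^{(c)}_{d+1} w^{(c)} v^{(c)}_{d+1}} \le \pi/2$.

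The heart of the argument would be a reduction chain of length $c-1$ built from the two moves already present in Theorem~\ref{thm:pre}. At the top, I would execute the ``product step'' on $\Gamma^{(c)}_G$: defining $r_k \coloneqq \alpha_{k,3}/\alpha_{k,1}$ with the same angle labels as in Theorem~\ref{thm:pre}, the telescoping identity for $\alpha_{2,1}$ combined with the contradiction assumption forces some $r_i = \Omega(d/\log d)$, whereupon Lemma~\ref{lem:geometric} supplies a sub-drawing $\Gamma^{(c-1)}_G$ of $G^{(c-1)}_d$ with outer angle $\delta_{c-1} = O(\sqrt{\log d/d})$. Afterwards, I would iterate the ``averaging step'' at every intermediate level: for each $k = c-1, c-2, \ldots, 3$ I would choose $j_k \coloneqq \arg\min_{\ell \in \{d/2,\ldots,d\}} \angle{v^{(k)}_\ell w^{(k)} v^{(k)}_{\ell-1}}$, bound the corresponding $\alpha^{(k)}_1 \le 2\delta_k/(d+2)$, exploit that $\alpha^{(k)}_2 = \angle{u^{(k)}_{j_k} w^{(k)} v^{(k)}_{j_k-1}}$ contains at least $d-2$ inner angles at $w^{(k)}$, each of size $\omega((\log d)^{\epsilon_c}/d^{3/2})$, and apply Lemma~\ref{lem:geometric} to produce a drawing $\Gamma^{(k-1)}_G$ of $G^{(k-1)}_d$ whose outer angle $\delta_{k-1}$ is controlled in terms of $\delta_k$ and $\mu_c \coloneqq (\log d)^{\epsilon_c}/d^{3/2}$.

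Once the chain has reached $\Gamma^{(2)}_G$ with some outer angle $\delta_2$, one more averaging step together with Lemma~\ref{lem:geometric} would supply, at the root of an inserted copy of $G^{(1)}_d$ (a $(d+1)$-frame), an angle whose upper bound from the lemma conflicts with the lower bound $2d \cdot \omega(\mu_c)$ coming from the $2d$ inner angles it necessarily contains, thereby delivering the desired contradiction. The main obstacle will be the precise bookkeeping of the sequence $\delta_{c-1}, \delta_{c-2}, \ldots, \delta_2$ through this chain of averaging reductions: verifying that the recurrence produced by the single initial product step together with the subsequent averaging steps collapses exactly to the exponent $\epsilon_c = 1/(2\cdot 3^{c-2})$ at the final contradiction step is the delicate accounting, and certifying the ``at least $d-2$'' inner-angle counts along the nested $(d+1)$-frame structures is the structural ingredient that must not be dropped.
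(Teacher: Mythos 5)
Your overall architecture --- restrict to $H^{(c)}_d$ and then to $G^{(c)}_d$, run the telescoping/product step once at the top to extract a copy of $G^{(c-1)}_d$ with apex angle $O(\sqrt{\log d/d})$, and then iterate an averaging-plus-Lemma~\ref{lem:geometric} step $c-2$ times down to a $(d+1)$-frame whose apex angle is split among $2d$ angles --- is exactly the paper's proof; the iterated step is packaged there as Claim~\ref{cl} and applied $c-2$ times. (Your induction framing is superfluous: as described, the inductive hypothesis for $c-1$ is never invoked.)

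The genuine gap sits precisely in the piece you defer as ``delicate accounting,'' and it does not close the way you hope. If every averaging step uses the \emph{uniform} threshold $\mu_c=(\log d)^{\epsilon_c}/d^{3/2}$ to lower-bound $\alpha^{(k)}_2=\Omega(d\mu_c)$, then writing $\delta_k=(\log d)^{e_k}/d^{1/2}$ the lemma gives
\[
\delta_{k-1}=O\!\left(\sqrt{\frac{\delta_k/d}{\,d\mu_c\,}}\right),\qquad\text{i.e.}\qquad e_{k-1}=\frac{e_k-\epsilon_c}{2}.
\]
Starting from $e_{c-1}=1/2$ and iterating $c-2$ times, the final contradiction $e_1\le\epsilon_c$ forces only $\epsilon_c\ge 1/(2^c-2)$. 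This coincides with $1/(2\cdot 3^{c-2})$ for $c=2,3$ but is strictly larger for $c\ge 4$ (e.g.\ $1/14$ versus the claimed $1/18$ at $c=4$), so your recurrence establishes a quantitatively weaker bound than the theorem asserts. To obtain the divide-by-three recurrence $e_{k-1}=e_k/3$ that produces $1/(2\cdot 3^{c-2})$, the lower bound fed into the $a$-th averaging step must be the level-dependent threshold $(\log d)^{1/(2\cdot 3^{a})}/d^{3/2}$ rather than the single final threshold $\mu_c$; this is exactly how Claim~\ref{cl} is parameterized in the paper (outer-angle exponent $1/(2\cdot 3^{a-1})$ in; angle threshold and output exponent $1/(2\cdot 3^{a})$ out), and it is the one idea missing from your write-up. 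Your weaker exponent $1/(2^c-2)$ still tends to $0$, so the corollary for arbitrary $\varepsilon>0$ would survive, but Theorem~\ref{thm:main} as stated would not be proved by your bookkeeping for $c\ge 4$.
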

To prove this theorem, we first show the following claim:
\begin{cl}
Let $a$ be any constant, and $\Gamma$ be a drawing  of  the $d$-frame graph $F_d$ such that $\angle{u_{d}wv_{d}} = O((\log d)^{\frac{1}{2 \cdot 3^{a-1}}}/d^{\frac{1}{2}})$ and all angles have size $\omega((\log d)^{\frac{1}{2 \cdot 3^a}}/d^{\frac{3}{2}})$.
There exists an index $j$ such that
$\angle{wv_jv_{j-1}}=O((\log d)^{\frac{1}{2\cdot 3^{a}}}/d^{\frac{1}{2}})$ or $\angle{v_ju_jv_{j-1}}=O((\log d)^{\frac{1}{2\cdot 3^{a}}}/d^{\frac{1}{2}})$ holds in $\Gamma$.
\label{cl}
\end{cl}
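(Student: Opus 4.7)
The plan is to mirror the ``second-half'' analysis in the proof of Theorem~\ref{thm:pre} (the part that processed the drawing $\Gamma_G^{(2)}$), but with the exponents $\tfrac{1}{2\cdot 3^{a-1}}$ and $\tfrac{1}{2\cdot 3^a}$ in place of $\tfrac{1}{2}$ and $\tfrac{1}{6}$. The idea is to locate an index $j$ in the upper half of the range for which $\angle{v_{j-1}wv_j}$ is anomalously small, observe that the complementary angle $\angle{u_jwv_{j-1}}$ is then forced to be large by the hypothesized angular resolution lower bound, and finally apply Lemma~\ref{lem:geometric} to the triangle $\triangle{wv_ju_j}$ with $v_{j-1}$ playing the role of the interior point~$D$.

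In detail, I would set $j \coloneqq \arg\min_{k \in \{d/2,\dots,d\}} \angle{v_{k-1}wv_k}$ and write $\alpha_1' \coloneqq \angle{v_{j-1}wv_j}$ and $\alpha_2' \coloneqq \angle{u_jwv_{j-1}}$. Since the angles $\angle{v_{k-1}wv_k}$ over this range are disjoint sub-angles of $\angle{v_1wv_d} \leq \angle{u_dwv_d}$, averaging gives
\[\alpha_1' \leq \frac{2}{d}\,\angle{u_dwv_d} = O\!\left(\frac{(\log d)^{1/(2\cdot 3^{a-1})}}{d^{3/2}}\right).\]
On the other hand, $\alpha_2'$ is a sum of at least $2j-2 \geq d-2$ consecutive angles at $w$, each of size $\omega((\log d)^{1/(2\cdot 3^a)}/d^{3/2})$, so $\alpha_2' = \omega((\log d)^{1/(2\cdot 3^a)}/d^{1/2})$. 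Using the exponent identity $\tfrac{1}{2\cdot 3^{a-1}} - \tfrac{1}{2\cdot 3^a} = \tfrac{1}{3^a}$, this yields $\alpha_1'/\alpha_2' = o((\log d)^{1/3^a}/d)$. The apex angle $\angle{v_jwu_j} \leq \angle{u_dwv_d}$ is well below $\pi/2$ for large $d$, so Lemma~\ref{lem:geometric} applies to $\triangle{wv_ju_j}$ with $D = v_{j-1}$ and produces
\[\min\!\left\{\frac{\beta_2}{\beta_1},\frac{\gamma_2}{\gamma_1}\right\} \leq \frac{\pi^2}{4}\sqrt{\frac{\alpha_1'}{\alpha_2'}} = O\!\left(\frac{(\log d)^{1/(2\cdot 3^a)}}{d^{1/2}}\right),\]
where $\beta_2 = \angle{wv_jv_{j-1}}$ and $\gamma_2 = \angle{v_ju_jv_{j-1}}$ in the labeling of Theorem~\ref{thm:pre}. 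Since $\beta_1,\gamma_1 \leq \pi$, at least one of $\beta_2,\gamma_2$ must itself satisfy this bound, which is exactly the conclusion sought.

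The main obstacle is to justify the geometric hypotheses of Lemma~\ref{lem:geometric}, namely that $v_{j-1}$ lies strictly inside $\triangle{wv_ju_j}$ in the fixed embedding of $F_d$ (inherited from the $3$-connected graph $\widetilde{H}^{(c)}_d$) and that the sub-angle labeling matches the convention of Figure~\ref{fig:triangle_lemma}. Once these configuration checks are in place, the remaining work is the exponent bookkeeping above, which is a direct generalization of the computation already carried out for $a = 1$ in the proof of Theorem~\ref{thm:pre}.
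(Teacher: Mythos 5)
Your proposal is correct and follows essentially the same route as the paper's own proof: the same choice of $j$ as the minimizer over $k=d/2,\dots,d$, the same averaging bound on $\alpha_1'$, the same counting of the $\omega(\cdot)$ angles inside $\alpha_2'$, and the same application of Lemma~\ref{lem:geometric} followed by the trivial bound $\beta_1,\gamma_1\leq\pi$. Your explicit attention to the hypotheses of the lemma (the interior-point condition and the apex angle being at most $\pi/2$) is a point the paper leaves implicit, but the argument is the same.
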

\begin{proof}
Let $j \coloneqq \arg\min_{k=\frac{d}{2},\dots,d} \angle{v_kwv_{k-1}}$,
 $\alpha_{1} \coloneqq \angle{v_jwv_{j-1}}$, and $\alpha_{2} \coloneqq \angle{u_jwv_{j-1}}$.
Applying a similar argument to the proof of Theorem~\ref{thm:pre}, we obtain $\alpha_{1}   = O((\log d)^{\frac{1}{2\cdot 3^{a-1}}}/d^{\frac{3}{2}})$ and $\alpha_2  = \omega((\log d)^{\frac{1}{2\cdot 3^a}}/d^{\frac{1}{2}})$.
This leads to that
$\alpha_{2}  /\alpha_{1} =\Omega(d/(\log d)^{\frac{1}{3^{a}}})$.
Let $\beta_{1} \coloneqq \angle{u_jv_jv_{j-1}}$, $\beta_{2} \coloneqq \angle{wv_jv_{j-1}}$, $\gamma_{1} \coloneqq \angle{wu_jv_{j-1}}$, and $\gamma_{2} \coloneqq \angle{v_ju_jv_{j-1}}$. 
By Lemma~\ref{lem:geometric}, we have either $\beta_{2} /\beta_{1} =O((\log d)^{\frac{1}{2\cdot 3^{a}}}/d^{\frac{1}{2}})$ or $\gamma_{2} /\gamma_{1} =O((\log d)^{\frac{1}{2\cdot 3^{a}}}/d^{\frac{1}{2}})$.
This implies $\angle{wv_jv_{j-1}}=O((\log d)^{\frac{1}{2\cdot 3^{a}}}/d^{\frac{1}{2}})$ or $\angle{v_ju_jv_{j-1}}=O((\log d)^{\frac{1}{2\cdot 3^{a}}}/d^{\frac{1}{2}})$.
\end{proof}
\\
\\
Now we are ready to prove Theorem~\ref{thm:main}.
\begin{proof_main}
Let us assume that there exists a drawing  of $H^{(c)}_d$ such that $\triangle{s_1s_2s_3}$ is the outer face, and
all angles have size $\omega((\log d)^{\frac{1}{2\cdot 3^{c-2}}}/d^{\frac{3}{2}})$.
Using the same argument as in the proof of Theorem~\ref{thm:pre}, we obtain a drawing of $G^{(c-1)}_d$ such that $\angle{u^{(c-1)}_{d+1}w^{(c-1)}v^{(c-1)}_{d+1}} = O((\log d)^{\frac{1}{2}}/d^{\frac{1}{2}})$
and all angles have size $\omega((\log d)^{\frac{1}{2\cdot 3^{c-2}}}/d^{\frac{3}{2}})$.
Applying Claim~\ref{cl} ($c-2$) times, we obtain a drawing of $G^{(1)}_d$ such that $\angle{u_{d+1}^{(1)}w^{(1)}v_{d+1}^{(1)}}=O((\log d)^{\frac{1}{2\cdot 3^{c-2}}}/d^{\frac{1}{2}})$, and all angles have size $\omega((\log d)^{\frac{1}{2\cdot 3^{c-2}}}/d^{\frac{3}{2}})$.
Since the angle $\angle{u_{d+1}^{(1)}w^{(1)}v_{d+1}^{(1)}}$ include $2d$ angles in their interior, there is an angle of size $O((\log d)^{\frac{1}{2\cdot 3^{c-2}}}/d^{\frac{3}{2}})$. This is a contradiction.
Therefore,  graph $H^{(c)}_d$ has an angle of size $O((\log d)^{\frac{1}{2\cdot 3^{c-2}}}/d^{\frac{3}{2}})$ in any drawing where $\triangle{s_1s_2s_3}$ is the outer face.
Since any drawing of  $\widetilde{H}^{(c)}_d$  contains a  drawing of $G^{(c)}_d$ in which $\triangle{s_1s_2s_3}$ is the outer face,
it must contain an angle of size  $O((\log d)^{\frac{1}{2\cdot 3^{c-2}}}/d^{\frac{3}{2}})$.
Hence the theorem is proved.
\end{proof_main}
\\
\\
By considering the case $c=2-\lfloor \log_3 (2\varepsilon) \rfloor$ for $\varepsilon > 0$, we obtain the following corollary.
\begin{cor}
Let $\varepsilon$ be any positive constant.
There is a family of planar graphs with maximum degree $d$ that have angular resolution $O((\log d)^{\varepsilon}/d^{\frac{3}{2}})$ in any planar straight-line drawing.
\end{cor}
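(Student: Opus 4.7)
The plan is to extend the argument of Theorem~\ref{thm:pre} by iterating a single descent step through $c-2$ additional nested levels of the graph $G^{(c)}_d$. I would argue by contradiction: assume a drawing of $\widetilde{H}^{(c)}_d$ in which every angle has size $\omega((\log d)^{1/(2\cdot 3^{c-2})}/d^{3/2})$. Any such drawing contains a drawing of $H^{(c)}_d$ whose outer face is $\triangle{s_1s_2s_3}$, and then, exactly as in the first paragraph of the proof of Theorem~\ref{thm:pre}, restricting to $\triangle{s_2s_3s_4}$ yields a drawing of $G^{(c)}_d$ with apex $\angle{u^{(c)}_{d+1}w^{(c)}v^{(c)}_{d+1}} \leq \pi/2$ that inherits the same lower bound on all angles.

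From this I would perform a single ``heavy descent'' exactly like the $r_k$-product argument in Theorem~\ref{thm:pre}: find an index $i$ with $\alpha_{i,2}/\alpha_{i,1} = \Omega(d/\log d)$ and apply Lemma~\ref{lem:geometric} there to extract from $\triangle{w^{(c)}v^{(c)}_iv^{(c)}_{i-1}}$ or $\triangle{v^{(c)}_{i-1}u^{(c)}_iv^{(c)}_i}$ a drawing of $G^{(c-1)}_d$ whose apex satisfies $\angle{u^{(c-1)}_{d+1}w^{(c-1)}v^{(c-1)}_{d+1}} = O((\log d)^{1/2}/d^{1/2})$, still inheriting the global angle lower bound. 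Note that the exponent $1/(2\cdot 3^{c-2})$ in the new hypothesis is vanishing and is absorbed into the $O$-constant, so this step is identical to the one in Theorem~\ref{thm:pre}.

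Next I would iterate Claim~\ref{cl}: for $a = 1, 2, \dots, c-2$, apply the claim to the outer $(d+1)$-frame of the current drawing of $G^{(c-a)}_d$. Under the invariant that its apex is $O((\log d)^{1/(2\cdot 3^{a-1})}/d^{1/2})$, the claim produces an angle $\angle{wv_jv_{j-1}}$ or $\angle{v_ju_jv_{j-1}}$ of size $O((\log d)^{1/(2\cdot 3^{a})}/d^{1/2})$; by the construction of $G^{(c-a)}_d$, the triangle bounded by that small angle contains a copy of $G^{(c-a-1)}_d$ with its root placed at the corresponding vertex, so its apex is bounded by the just-obtained small angle and the invariant is reestablished at stage $a+1$. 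After $c-2$ iterations I arrive at a drawing of $G^{(1)}_d$, i.e.\ a $(d+1)$-frame, with apex $O((\log d)^{1/(2\cdot 3^{c-2})}/d^{1/2})$; since this apex contains $2d$ of the original angles, pigeonhole forces one of them to be $O((\log d)^{1/(2\cdot 3^{c-2})}/d^{3/2})$, contradicting the hypothesis. The passage from $H^{(c)}_d$ back to $\widetilde{H}^{(c)}_d$ is identical to the one at the end of the proof of Theorem~\ref{thm:pre}, since any drawing of the latter contains a drawing of the former in which $\triangle{s_1s_2s_3}$ is the outer face.

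The main obstacle will be the careful bookkeeping of exponents through the $c-2$ iterations, in particular verifying at each stage that the hypothesis of Claim~\ref{cl} is satisfied: the apex-size part of the invariant is clean (it is exactly what the claim outputs at the previous stage), while the ``all angles have size $\omega((\log d)^{1/(2\cdot 3^a)}/d^{3/2})$'' hypothesis has to be reconciled with the global lower bound $\omega((\log d)^{1/(2\cdot 3^{c-2})}/d^{3/2})$ that every sub-drawing inherits. Once these invariants are established by induction on $a$, the theorem follows immediately from the final contradiction.
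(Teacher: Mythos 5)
Your route to the corollary is the paper's route: deduce it from Theorem~\ref{thm:main}, whose proof you then reconstruct (the heavy descent into $G^{(c-1)}_d$, then $c-2$ applications of Claim~\ref{cl}, then the pigeonhole in the final $(d+1)$-frame). Two things are missing. The first is minor but is literally the entire content of the paper's proof of the corollary: you never convert $\varepsilon$ into a choice of $c$. One must fix $c$ with $\frac{1}{2\cdot 3^{c-2}}\leq\varepsilon$ (the paper takes $c=2-\lfloor\log_3(2\varepsilon)\rfloor$) and note that $\widetilde{H}^{(c)}_d$ has maximum degree at most $8d+31=O(d)$, so the bound $O((\log d)^{1/(2\cdot 3^{c-2})}/d^{3/2})=O((\log d)^{\varepsilon}/d^{3/2})$ survives reparametrization by the actual maximum degree. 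Your write-up never mentions $\varepsilon$ again after the statement.

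The second is the ``reconciliation'' you explicitly defer to the end, and it is not bookkeeping: it is a genuine gap. Claim~\ref{cl} at stage $a$ needs \emph{all} angles to be $\omega((\log d)^{1/(2\cdot 3^{a})}/d^{3/2})$ --- that is exactly what makes $\alpha_2=\omega((\log d)^{1/(2\cdot 3^{a})}/d^{1/2})$ in its proof --- whereas the global contradiction hypothesis only supplies $\omega((\log d)^{1/(2\cdot 3^{c-2})}/d^{3/2})$, which is strictly \emph{weaker} for every $a<c-2$. If you run the claim with only the global exponent $\delta=\frac{1}{2\cdot 3^{c-2}}$, the apex exponent obeys $e_a=(e_{a-1}-\delta)/2$ with $e_0=1/2$, hence $e_{c-2}=(\tfrac12+\delta)2^{-(c-2)}-\delta$; this exceeds $\delta$ exactly when $3^{c-2}+1>2^{c-1}$, i.e.\ for every $c\geq 4$, so the final pigeonhole yields an angle of size $O((\log d)^{e_{c-2}}/d^{3/2})$ with $e_{c-2}>\delta$ and \emph{no contradiction results}. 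The obvious case split does not rescue this either: if the stage-$a$ hypothesis fails, you only obtain an angle of size $O((\log d)^{1/(2\cdot 3^{a})}/d^{3/2})$, again weaker than the target bound. So the ``induction on $a$'' you invoke requires an actual new argument, not verification. (For what it is worth, the paper's own proof of Theorem~\ref{thm:main} also asserts without justification that the claim's hypotheses are met at every stage; you have correctly located the pressure point, but your proposal does not close it.)
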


\subsection*{Acknowledgements}
The author is thankful to Masaya Sekine for helpful discussions.
This work was supported by JSPS KAKENHI Grant Number JP19K20210.


\begin{thebibliography}{99}
\bibitem{CDGK01} C. Cheng, C. A. Duncan, M. T. Goodrich, and S. G. Kobourov. Drawing planar graphs with circular arcs. Discrete and Computational Geometry, vol.~25, pp.~405--418, 2001.
\bibitem{DETT99} G. Di Battista, P. Eades, R. Tamassia, and I. G Tollis. Graph Drawing. Prentice Hall, Upper Saddle River, NJ 07458, 1999.
\bibitem{FHHKLSWW90} M. Formann, T. Hagerup, J. Haralambides, M. Kaufmann, F.T. Leighton, A. Simvonis, E. Welzl, and G. Woeginger. Drawing graphs in the plane with high resolution. SIAM Journal on Computing, vol.~22, iss.~5, pp.~1035--1052, 1993.
\bibitem{GM98} C. Gutwenger and P. Mutzel. C. Gutwenger and P. Mutzel. Planar polyline drawings with good angular resolution. Proceedings of 6th International Symposium on Graph Drawing, Lecture Notes in Computer Science, vol.~1547, Springer, Heidelberg, pp.~167--182, 1998.
\bibitem{GT94} A. Garg and R. Tamassia, Planar drawings and angular resolution: Algorithms and bounds. Proceedings of 2nd European Symposium on Algorithms (ESA '94), Lecture Notes in Computer Science, vol.~855, Springer-Verlag, Berlin, pp.~12--23, 1994.
\bibitem{K96} G. Kant. Drawing planar graphs using the canonical ordering. Algorithmica, vol.~6, pp.~4--32,  1996.
\bibitem{LLMN23} W. Lenhart, G. Liotta, D. Mondal, and R. Nishat. Planar and plane slope number of partial 2-trees. Algorithmica, vol.~85, pp.~1156--1175, 2023.
\bibitem{MP94} S. Malitz and A. Papakostas. On the angular resolution of planar graphs. SIAM Journal on Discrete Mathematics, vol.~7, iss.~2, pp.~172--183, 1994.
\bibitem{NR04}T. Nishizeki and M. S. Rahman. Planar Graph Drawing. Lecture Notes Series on Computing, vol.~12, World Scientific, 2004.
\bibitem{T13} R. Tamassia (ed.). Handbook of Graph Drawing and Visualization. Discrete Mathematics and Its Applications. Chapman and Hall/CRC, 2013.
\end{thebibliography}
\end{document}